\newtheorem{dfn}{Definition}[section]
\newtheorem{theorem}{Theorem}[section]
\newtheorem{remark}{Remark}[section]
\def \p {\mathbf{p}}
\def \bmu {\boldsymbol{\mu}}
\def \btheta {\boldsymbol{\theta}}
\def \F {\mathcal F}
\def \bF {\mathbb F}
\newcommand{\ud}{\mathrm d}
\newcommand{\ds}{\displaystyle}
\newcommand{\esp}[2][\mathbb E] {#1\left[#2\right]}
\def \I {{\mathbf 1}}
\def \P {\mathbf P}
\def \R {\mathbb R}
\begin{document}



\title{Pairs Trading under Drift Uncertainty and Risk Penalization}

\author[S. Altay]{S\"{u}han Altay}
\address{S\"{u}han Altay, \textrm{Department of Financial and Actuarial Mathematics, Vienna University of Technology, Wiedner Hauptstrasse 8-10, 1040 Vienna, Austria}}
\email{altay@fam.tuwien.ac.at}
\author[K.~Colaneri]{Katia Colaneri}
\address{Katia Colaneri, \textrm{School of Mathematics, University of Leeds,  LS2 9JT Leeds, UK.}}\email{k.colaneri@leeds.ac.uk}
\author[Z.~Eksi]{Zehra Eksi}
\address{Zehra Eksi, \textrm{Institute for Statistics and Mathematics, WU-University of Economics and Business, Welthandelsplatz 1, 1020, Vienna, Austria.}}\email{zehra.eksi@wu.ac.at}

\begin{abstract}
In this work, we study a dynamic portfolio optimization problem related to pairs trading, which is an investment strategy that matches a long position in one security with a short position in another security with similar characteristics. The relationship between pairs, called a spread, is modeled by a Gaussian mean-reverting process whose drift rate is modulated by an unobservable continuous-time, finite-state Markov chain. Using the classical stochastic filtering theory, we reduce this problem with partial information to the one with full information and solve it for the logarithmic utility function, where the terminal wealth is penalized by the riskiness of the portfolio according to the realized volatility of the wealth process. We characterize optimal dollar-neutral strategies as well as optimal value functions under full and partial information and show that the certainty equivalence principle holds for the optimal portfolio strategy. Finally, we provide a numerical analysis for a toy example with a two-state Markov chain.\\

{\textsc{Keywords} : }{\textit{pairs trading, regime-switching, utility maximization, risk penalization and partial information.}}

\end{abstract}

\maketitle

\section{Introduction}
 Pairs trading is an investment strategy that attempts to capitalize on market inefficiencies arising from imbalances between two or more stocks. This kind of strategy involves a long position and a short position in a pair of similar stocks that have moved together historically. Examples of such pairs can be given: ExxonMobil and Royal Dutch and Shell for the oil industry, or Pfizer and GlaxoSmithKline for the pharmaceutical industry. The underlying rationale of pairs trading is to buy the underperformer, and sell the overperformer, in anticipation that the security that has performed badly will make up for loss in the coming periods, perhaps even overperform the other, and vice-versa. For this reason, it is also classified as a convergence or mean-reversion strategy. The pair of stocks is selected in a way that it forms a mean-reverting portfolio referred to as the \emph{spread}. By forming an appropriate spread, pairs traders try to limit the directional risk that arises from the market's up or down movements by simultaneously going long on one stock and short in another. Since market risk is mitigated, profits depend only on the price changes between the two stocks and they can be realized through a net gain on the spread. Therefore, one can also see pairs trading in the class of \emph{market-neutral} trading strategies. To achieve market neutrality, traders can choose corresponding strategies so that the resulting portfolio has zero (CAPM) beta, hence it is beta-neutral. Alternatively, one can use a \emph{dollar-neutral} strategy, which is investing an equal dollar amount in each stock. However, we should remark that market neutrality does not imply either risk-free return or arbitrage in the classical sense. The risk inherited in pair strategies is different from the risk in investment strategies involving only a long or short position in a specific stock or market. Indeed, pairs trading is a form of statistical arbitrage, which can be defined broadly as a long-horizon trading strategy that generates riskless profits asymptotically (see \citet{hogan2004testing} for the definition of the statistical arbitrage and \citet{akyildirim2016} for the existence of statistical arbitrage for pairs trading strategies). As it is empirically documented by \citet{gatev2006pairs}, coupled with a simple pairs selection algorithm, such statistical arbitrage strategies may yield average annualized excess returns of up to 11 percent, which still remains profitable after compensated by the most conservative transaction costs.

In this work, we consider the portfolio optimization problem of a trader with a logarithmic utility from risk penalized terminal wealth investing in a pair of assets whose dynamics have a certain dependence structure in a Markov regime-switching model. More precisely, we model the spread process (log-price differential) as an Ornstein--Uhlenbeck process with a partially observable Markov modulated drift. Our motivation for modeling the drift of the spread and drifts of both assets as a function of an unobservable finite-state Markov chain has certain advantages. Firstly, drifts of financial assets are hardly constant and observable, especially if we think of the convergence-type investment strategies that are usually valid for longer periods. Secondly, although pairs are selected in such a way that they have similar characteristics, the dynamics of the spread between them might be prone to different regimes. For example, if one leg of the pair is selected to be listed in an index such as the S\&P 500  while the other is not, this might increase the demand for the one that is listed. Hence, that would eventually increase the level of the spread, at least until the one listed in the index is deleted from the index or the other leg of the pair is also added. Moreover, in reality, it is difficult to observe or characterize both microstructure (market-based) or macrostructure (economy-wide) state variables changing with respect to different regimes. That would necessitate using a partial information framework to model such state processes.

Numerous studies analyze portfolio selection problems in a full or partial information and/or Markov regime-switching framework, see, for example, \citet{zhou2003markowitz}, \citet{bauerle2004portfolio}, and \citet{sotomayor2009explicit} for the full information case with Markov regime switching or \citet{rieder2005portfolio}, \citet{frey2012portfolio}, and \citet{bjork2010optimal} for the partial information case. However, to the best of our knowledge, identification of optimal pairs trading strategies in a Markov-modulated setting under partial information is new.

Our proposed model is an extended version of the model given by \citet{mudchanatongsuk2008optimal}, who found the optimal pairs trading strategies in a dollar-neutral setting for an investor with power utility. Although investing equal dollar amount (as a proportion of wealth) in pairs seems to be restrictive, it is meaningful when CAPM betas of the selected stocks are very close to each other. Our model extends the work of \citet{mudchanatongsuk2008optimal} by allowing partially observed Markov-modulated drifts both for the price processes and the spread,  hence enabling them to change with respect to different conditions. As the second extension, to find the optimal trading strategies, we use a risk penalized terminal wealth as it is suggested in Section 2.22 of \citet{rogers2013optimal}. By penalizing the terminal wealth according to the realized volatility of the wealth process, the investor hopes to prevent the pairs trader pursuing risky strategies. Using risk penalization seems to be appropriate in pairs trading as most such strategies are executed by hedge funds and proprietary trading houses, which engage in high-risk transactions on behalf of investors. Risk penalization effectively increases the risk aversion of the trader and makes her take a less risky position. Apart from certain mathematical convenience, our choice of logarithmic utility function can be justified on several financial grounds. Firstly, although an investor can choose any utility function, representing her risk tolerance, a repetitive situation such as the one reflected in mean-reversion type trading strategies tends to force the utility function into the one that is close to a logarithmic one. For instance, in the power utility case it can be shown in a very simple example that too aggressive or too conservative choices for the risk-aversion parameter imply unrealistic preferences such as betting on strategies that have large losses with high probability and hence not suitable if the investor is focused in a long sequence of repeated trials, see e.g., Chapter 15 of  \citet{luenbergerinvestment}). This can only be alleviated when the risk-aversion parameter $\gamma$, in power utility \footnote{$U(x)=\frac{x^\gamma}{\gamma}$ for $\gamma\leq 1.$}, is close to zero, behaving more like the logarithmic utility. Therefore, we can argue that utility functions that are close to the logarithmic ones are appropriate for our setting. Secondly, by penalizing the terminal wealth with the realized volatility of the portfolio and using logarithmic utility, we can capture the intertemporal risk factor in our model more easily with just one parameter.

Although both the empirical and theoretical literature on pairs trading has been growing, published research on optimal portfolio problem is rather limited. \citet{mudchanatongsuk2008optimal} solve the stochastic control problem for pairs trading with power utility for terminal wealth. \citet{tourin2013dynamic} develop an optimal portfolio strategy to invest in two risky assets and the money market account, assuming that log-prices are co-integrated, as in the option pricing model of \citet{duan2004option}. \citet{cartea2016algorithmic} extend \citet{tourin2013dynamic} to allow the investor to trade in multiple co-integrated assets and provide an explicit closed-form solution of the dynamic trading strategy while assuming that the drift of asset returns consists of an idiosyncratic and common drift component. \citet{lee2016pairs} solve the optimal pairs trading problem within a power utility setting, where the drift uncertainty is modeled by a continuous mean-reverting process. It is also worth mentioning here the work of \citet{elliott2005pairs}, which proposes a pairs trading strategy based on stochastic filtering of a mean-reverting Gaussian Markov chain for the spread, which is observed in Gaussian noise.

Apart from identification of optimal trading strategies through utility maximization from terminal wealth, there is also recent literature on optimal liquidation and optimal (entry-exit) timing strategies related to pairs trading. For example, studies by \citet{ekstrom2011optimal}, \citet{larsson2013optimal}, and \citet{zeng2014pairs} focus on how to liquidate optimally a pairs trade by incorporating stop-loss thresholds. Moreover, \citet{leung2015optimal} study an optimal double-stopping problem to analyze the timing for starting and subsequently liquidating the position, subject to transaction costs, and \citet{lei2015costly} analyze a multiple entry-exit problem of a pair of co-integrated assets. An extensive list of references and a literature review on pairs trading and statistical arbitrage can be found in the recent survey paper by \citet{krauss2016statistical}.

To sum up, our contributions in this article can be stated as follows. First, we characterize the optimal dollar-neutral strategies both in full and partial information settings with risk-penalized terminal wealth for a log-utility trader and show that optimal strategies are dependent on both the correlation between two assets and the mean-reverting spread. The effect of risk-penalization on optimal strategies is an increase in risk-aversion uniformly in a constant proportion that is not dependent on time. Second, we characterize the optimal value function via Feynman--Kac formula. Third, using the innovations approach, we provide filtering equations that are necessary to reduce the problem with partial information to the one with full information. A nice feature of the solution in the partial information setting is that the optimal strategy is a linear function of the filtered state and hence it can be considered as a projection of the full information one on the investor's information filtration.

We also present numerical results for a toy example with a two-state Markov chain in both full and partial information settings. Our analysis shows that average data does not contain sufficient information to obtain the optimal value for the pairs trading problem for logarithmic utility preferences. This result is in contrast with the one for the classical portfolio optimization problem with Markov modulation, see  Section B in \citet*{bauerle2004portfolio}). Furthermore, our toy example suggests that there is always a gain from filtering due to the convexity arising from using filtered probabilities instead of constant ones.

The remainder of the paper is organized as follows. Section \ref{sec:setting} introduces the model. In Section \ref{sec:full_info} we analyze the portfolio optimization problem in a full information setting. In Section \ref{sec:partial_info} we solve the utility maximization problem under partial information. In Section \ref{sec:2stateMC}, we provide the numerical analysis of our toy example with a two-state Markov chain. We conclude with Section \ref{sec:discussion} and give proofs and technical results in Appendix.

\section{The model}\label{sec:setting}
\noindent We consider a finite time interval $[0,T]$ and a continuous-time finite-state Markov chain $Y$ defined on the filtered probability space $(\Omega,\mathcal{G},\mathbb{G},\mathbf{P})$, where $\mathbb{G}=(\mathcal{G}_t)_{t\geq 0}$ is the global filtration that satisfies the usual conditions; all processes we consider here are assumed to be $\mathbb{G}$-adapted. Suppose $Y$ has the state space $\mathcal{E}=\{e_1,e_2,...,e_K\}$ where, without loss of generality, we assume that $e_k$ is the basis column vector of $\mathbb{R}^K$. $Y$ has the intensity matrix $Q=(q^{ij})_{{i,j}\in\{1,\dots,K\}}$ and its initial distribution is denoted by $\Pi=(\Pi^1,\cdots,\Pi^K)$. The  semimartingale decomposition of $Y$ is given by
\begin{align}
Y_t=Y_0+\int_0^tQ^{\top}Y_s\ud s+M_t,
\end{align}
for every $t\in[0,T]$, where $M$ is an $(\mathbb{G},\bf P)$-martingale.

We consider a market with a risk-free asset and two stocks. We assume that the dynamics of the risk-free asset is given by
\begin{align}\label{eq:S0}
\ud S^{(0)}_t=rS^{(0)}_t\,\ud t, \quad S_0^{(0)}>0,
\end{align}
where $r\in\mathbb{R}$ is the risk-free interest rate. The stocks have prices $S^{(1)}$ and $S^{(2)}$, and the price process of the first stock is assumed to follow a Markov-modulated diffusion given by

\begin{align}\label{S1}
\frac{\ud S^{(1)}_t}{S^{(1)}_t}=\mu(Y_t)\,\ud t+\sigma\, \ud W^{(1)}_t, \quad S^{(1)}_0>0,\end{align}
with $\sigma>0$ and where $W^{(1)}$ is a $\mathbb{G}$-Brownian motion independent of $Y$. Since the Markov chain takes values in a finite state space we have that for every $t\in[0,T]$, $\mu(Y_t)=\bmu Y_t$ with $\bmu=(\mu_1,\dots,\mu_K)^{\top}$ and $\mu_i=\mu(e_i)\in \mathbb{R}$ for every $i\in \{1,\dots,K \}$.

It is assumed that the spread $S_t=\log{S^{(1)}_t}-\log{S^{(2)}_t}$, $t\in[0,T]$, follows a Markov-modulated Ornstein--Uhlenbeck process:
\begin{align}\label{S}
\ud S_t=\kappa(\theta(Y_t)-S_t)\,\ud t+\eta\, \ud W_t,\quad S_0\in\mathbb{R},
\end{align}
where $\kappa>0$ and $\eta>0$, $W$ is a $\mathbb{G}$-Brownian motion with $\langle W^{(1)},W\rangle_t=\rho t$, $\rho\in(-1,1)$, and $\theta(Y_t)=\btheta Y_t$, $t\in[0,T]$ with $\btheta=(\theta_1,\dots,\theta_K)^{\top}$ and $\theta_i=\theta(e_i)\in \mathbb{R}$ for every $i\in \{1,\dots,K \}$. It follows from \eqref{S1} and \eqref{S} that
\begin{align}
\frac{\ud S^{(2)}_t}{S^{(2)}_t}=\left(\mu(Y_t)-\kappa(\theta(Y_t)-S_t)+\frac{1}{2}\eta^2-\rho\sigma\eta\right)\ud t+\sigma\, \ud W^{(1)}_t-\eta \,\ud W_t, \quad S_0^{(2)}>0.
\end{align}
Let $X$ be the value of a self-financing portfolio and let $h^{(1)}$ and $h^{(2)}$ denote fractions of the wealth invested in
$S^{(1)}$ and $S^{(2)}$, respectively.
\paragraph{\bf{Admissible Investment Strategies.}} We consider \textit{dollar-neutral} pairs trading strategies. This corresponds to take $h^{(1)}$ and $h^{(2)}$ such that
\begin{equation}\label{eq:A1}
  h^{(1)}_t=-h^{(2)}_t, \quad t\in[0,T].
\end{equation}
In the sequel we are going to use the notation $h=h^{(1)}$. Note that $h_t\in\mathbb{R}$ for every $t\in[0,T]$ and the portfolio weight on the risk-free asset is always $1$. In order to ensure that the wealth process is well defined, we consider investment strategies that satisfy
\begin{equation}\label{eq:A2}
  \mathbb{E} \left[  \int_0^T h_u^2 \, \ud u \right] <\infty.
\end{equation}
\begin{dfn}
A $\mathbb{G}$-progressive self-financing investment strategy which satisfy \eqref{eq:A1} and \eqref{eq:A2} is called an \emph{admissible investment strategy}. We denote the set of admissible strategies by $\mathcal{A}$.
\end{dfn}
\noindent For every $h \in \mathcal{A}$, the dynamics of the pairs-trading portfolio is given by
\begin{align}\label{wealthdyn}
\frac{\ud X^h_t}{X^h_t}=\left(h_t\left(\kappa(\theta(Y_t)-S_t)-\frac{\eta^2}{2}+\rho\sigma\eta \right)+r  \right)\,\ud t+h_t \eta \,\ud W_t,\quad X_0^h>0.
\end{align}
Notice that for a given $h\in\mathcal{A}$, $X^h$ is a controlled process. In what follows, for the sake of notational simplicity we suppress $h$ dependency and write $X$ instead of $X^h$.
The objective of the trader is to maximize expected utility from terminal wealth. However, in the risk-penalized setting, see Section 2.22 of \citet{rogers2013optimal}, the goal is to prevent the trader from pursuing risky strategies at the expenses of the investor. The investor agrees to pay the trader at time $T$ the risk-penalized amount
\begin{align}Z_T=X_T\exp \left( -\frac{1}{2}\varepsilon \int_0^T  \eta^2h_s^2\, \ud s \right), \quad  \varepsilon \geq 0.\end{align}
Hence the terminal value of the wealth process is `discounted' by its realized volatility. It follows from It\^{o}'s formula that the dynamics of $Z$ is given by:
\begin{align}
\label{Cd}
\frac{dZ_t}{Z_t}=\left(h_t\left(\kappa(\theta(Y_t)-S_t)-\frac{\eta^2}{2}+\rho\sigma\eta \right)+r -\frac{\varepsilon\eta^2 h_t^2}{2} \right)\ud t+h_t \eta\, \ud W_t, \quad Z_0>0.\quad{}
\end{align}

In what follows, we study the optimization problem for a trader who is endowed with a logarithmic utility in case of regime switching and risk penalization. First, we consider the situation where the trader may observe the Markov chain Y that
influences the dynamics of price processes and the spread. Subsequently, we assume that the Markov chain is not observable and solve the optimization problem under partial information.

\section{Optimization problem under full information}\label{sec:full_info}
\noindent In this section, we suppose that the trader can observe all sources of randomness in the market. Her penalized wealth at time $T$ is given by
\begin{align}
\nonumber Z_T=z\exp\left\{ \int_t^T \left(h_u\left(\kappa (\theta(Y_u)-S_u)-\frac{\eta^2}{2}+\rho\sigma\eta \right)-\frac{h_u^2\eta^2(1+\varepsilon)}{2}+r\right)\,\ud u \right.\\
\left.+\int_t^T h_u\eta\, \ud {W}_u \right\}, \end{align}
for every $h \in \mathcal{A}$. Note that, condition \eqref{eq:A2} guarantees that the stochastic integral in the above expression is a true martingale and hence has a zero expected value.

Formally the trader faces the following optimization problem
\begin{align}\max\,\,\mathbb{E}^{t,z,s,i}[\log Z_T], \label{eq:objective}\end{align}
where $\mathbb{E}^{t,z,s,i}$ denotes the conditional expectation given $Z_t=z$, $S_t=s$ and $Y_t=e_i$.
We define the value function of the trader by
\begin{align}\label{vf}V(t,z,s,i):=\underset{h\in\mathcal{A}}\sup\,\,\mathbb{E}^{t,z,s,i}\left[\log{Z_T}\right].\end{align}
From now on, we use the following notation for the partial derivatives: for every function $g:[0,T]\times\mathbb{R}_{+}\times\mathbb{R}\to\mathbb{R},$ we write, for instance, $\frac{\partial g}{\partial t}=g_t.$

In the following theorem we characterize the optimal strategy and the corresponding value function.
\begin{theorem}\label{optimalfull} Consider a trader with a logarithmic utility function with risk penalization parameter $\varepsilon\geq0$. Then the optimal portfolio strategy $h^*\in \mathcal{A}$ is
\begin{align}h^{\ast}(t,s,i)=\frac{1}{1+\varepsilon}\left(\frac{\kappa\left(\theta_i-s\right)}{\eta^2}+\frac{\rho \sigma}{\eta}-\frac{1}{2}\right).\end{align}
The value function is of the form \begin{align}V(t, z, s, i )=\log(z)+r(T-t)+d(t) s^2+ c(t,i) s +f(t, i),\end{align}
where the function $d(t)$ is given by
\begin{align}
d(t)=\frac{\kappa}{4\eta^2(1+\varepsilon)}\left(1-e^{-2\kappa(T-t)}\right),
\end{align}
and the functions $c(t,i)$ and $f(t,i)$ for $i\in\{1, \dots, K\}$ solve the following system of ordinary differential equations
\begin{align}
&c_t(t,i)- \kappa c(t,i)+2\kappa \theta_i d(t)- \frac{\kappa^2 \theta_i-\kappa \frac{\eta^2}{2}+\kappa\rho\sigma\eta}{\eta^2 (1+\varepsilon)}+ \sum_{j=1}^K c(t,j)q^{ij}=0,\label{eq:system1}\\
&f_t(t,i)+\! d(t)\eta^2 + \kappa \theta_i c(t,i)+ \frac{\left(\kappa \theta_i -\frac{1}{2}\eta^2+\rho\sigma\eta\right)^2}{2\eta^2(1+\varepsilon)}\!+\!\sum_{j=1}^K f(t,j) q^{ij}=0    \quad{}\label{eq:system}
\end{align}
with terminal conditions $c(T,i)=0$ and $f(T,i)=0$ for all $i\in \{1,\dots,K \}$.
\end{theorem}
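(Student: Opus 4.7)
The plan is to apply dynamic programming, write the Hamilton--Jacobi--Bellman (HJB) equation for $V(t,z,s,i)$, and verify a separable ansatz suggested by the logarithmic utility.

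First, I would write the HJB associated with the generator of $(Z,S,Y)$ under an admissible control $h$, namely
\begin{align*}
V_t + rzV_z + \kappa(\theta_i-s)V_s + \tfrac{1}{2}\eta^2 V_{ss} + \sum_{j=1}^K q^{ij}V(t,z,s,j) \\
+ \sup_{h\in\mathbb{R}}\Bigl\{hzV_z\bigl[\kappa(\theta_i-s)-\tfrac{\eta^2}{2}+\rho\sigma\eta\bigr] - \tfrac{\varepsilon\eta^2}{2}h^2 z V_z + \tfrac{1}{2}\eta^2 h^2 z^2 V_{zz} + \eta^2 h z V_{zs}\Bigr\} = 0,
\end{align*}
with terminal condition $V(T,z,s,i)=\log z$. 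Motivated by the multiplicative wealth dynamics and log preferences, I would adopt the ansatz $V(t,z,s,i) = \log z + r(T-t) + d(t)s^2 + c(t,i)s + f(t,i)$, for which $zV_z=1$, $z^2 V_{zz}=-1$ and $V_{zs}=0$, so the Hamiltonian reduces to a strictly concave quadratic in $h$ independent of $z$.

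The unique maximizer of this quadratic is then exactly the claimed $h^{\ast}(t,s,i)$. Substituting it back, and using $\sum_j q^{ij}=0$ to cancel the $d(t)s^2$ term arising from the Markov chain generator, the remaining equation is a polynomial identity in $s$. Matching the coefficients of $s^2$, $s^1$ and $s^0$ produces, respectively: a scalar linear ODE for $d(t)$ with $d(T)=0$, which integrates explicitly to the displayed exponential formula; the coupled inhomogeneous linear ODE system \eqref{eq:system1} for $\{c(\cdot,i)\}_{i=1}^K$; and, once $c$ is known, the system \eqref{eq:system} for $\{f(\cdot,i)\}_{i=1}^K$. Both linear systems have continuous coefficients on $[0,T]$ and thus admit unique $C^1$ solutions satisfying the terminal conditions $c(T,i)=f(T,i)=0$ by standard ODE theory.

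The main obstacle is the verification step, i.e.\ identifying this candidate with the value function in \eqref{vf} and checking that $h^{\ast}\in\mathcal{A}$. Admissibility follows from the boundedness of $d$ and $c(\cdot,i)$ on $[0,T]$ combined with the uniform boundedness of the second moments of the OU spread $S$, which yields $\mathbb{E}\int_0^T (h_u^\ast)^2 \ud u<\infty$. For optimality, I would apply It\^o's formula to $V(t,Z_t^h,S_t,Y_t)$ for an arbitrary $h\in\mathcal{A}$; by construction of $V$ via the HJB, the finite-variation part is non-positive and vanishes when $h=h^\ast$. The remaining local martingale consists of stochastic integrals against $W$ with integrands $\eta h_t$ and $\eta(2d(t)S_t+c(t,Y_t))$, together with a pure-jump martingale driven by $M$ whose integrand depends on the bounded functions $c(t,\cdot),f(t,\cdot)$; all three are true martingales on $[0,T]$ thanks to \eqref{eq:A2}, the boundedness of $d,c,f$, and the square-integrability of $S$ and $Y$. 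Taking expectations (localizing by stopping times if needed) then yields $\mathbb{E}^{t,z,s,i}[\log Z_T]\leq V(t,z,s,i)$ for every $h\in\mathcal{A}$, with equality at $h^{\ast}$, thereby proving both optimality and the identification of the value function.
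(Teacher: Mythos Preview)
Your proof is correct, but it follows a different route from the paper. The paper exploits the logarithmic utility directly: since $\log Z_T$ decomposes as $\log z$ plus a Lebesgue integral plus a stochastic integral that is a true martingale by \eqref{eq:A2}, taking expectations in \eqref{eq:objective} produces an expression of the form
\[
\log z + r(T-t) + \mathbb{E}^{t,s,i}\!\left[\int_t^T\!\Bigl(h_u\bigl(\kappa(\theta(Y_u)-S_u)-\tfrac{\eta^2}{2}+\rho\sigma\eta\bigr)-\tfrac{(1+\varepsilon)\eta^2}{2}h_u^2\Bigr)\ud u\right],
\]
and the integrand is maximized pointwise in $h$, yielding $h^\ast$ immediately with no verification argument required. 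The resulting optimal value is then an explicit conditional expectation of a functional of $(S,Y)$, and the paper invokes the Feynman--Kac formula for Markov-modulated diffusions to show this expectation satisfies the PDE \eqref{refpde}; the quadratic ansatz in $s$ then reduces that PDE to the ODE for $d$ and the systems \eqref{eq:system1}--\eqref{eq:system}.

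By contrast, you set up the full HJB equation for $V(t,z,s,i)$, plug in the separable ansatz, and then carry out a verification step. Both arguments are standard and lead to the same ODE systems after matching powers of $s$. The paper's route is shorter here precisely because log utility makes the pointwise maximization legitimate without any appeal to dynamic programming, so optimality comes for free and Feynman--Kac is used only to \emph{represent} an already-identified optimal value. Your HJB route is the more general template (it would extend to other utilities where the pointwise argument fails), but the price is the verification argument you sketch; that sketch is sound, and your treatment of admissibility via the square-integrability of the OU spread is the right observation.
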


\begin{proof}
We first apply pointwise optimization  to obtain the optimal portfolio strategy. 
By computing the expectation in \eqref{eq:objective}, we get
\begin{align}\nonumber\mathbb{E}^{t,z,s,i}[\log Z_T]=&\log(z)+r(T-t)-\mathbb{E}^{t,s,i}\left[ \int_t^T \frac{h_u^2\eta^2(1+\varepsilon)}{2}\,\ud u\right]     \\
\label{logustint}&+\mathbb{E}^{t,s,i}\left[ \int_t^T h_u\left(\kappa (\theta(Y_u)-S_u)-\frac{\eta^2}{2}+\rho\sigma\eta \right)\ud u\right],\end{align}
where, according to the previous notation, $\mathbb{E}^{t,s,i}$ denotes the conditional expectation given $S_t=s$ and $Y_t=e_i$.
The first order condition given by
\begin{equation}
-h_t^{*}\eta^2(1+\varepsilon)+\kappa(\theta(Y_t)-S_t)-\frac{\eta^2}{2}+\rho\sigma\eta=0,
\end{equation}
provides the following candidate for the optimal strategy
\begin{align}\ds h^{\ast}(t,s,i)=\frac{1}{1+\varepsilon}\left(\frac{\kappa\left(\theta_i-s\right)}{\eta^2}+\frac{\rho \sigma}{\eta}-\frac{1}{2}\right).\end{align}
The second order condition, $-\eta^2 (1+\varepsilon)<0$, ensures that $h^*$ is the well defined maximizer and hence the optimal portfolio strategy.
By inserting the optimal strategy into \eqref{logustint}, we get a stochastic representation for the optimal value, that is,
\begin{align}\label{stochrepvaluefull}
\log(z)+r(T-t)+\mathbb{E}^{t,s,i}\left[ \int_t^T \frac{(\kappa (\theta(Y_u)-S_u)-\frac{\eta^2}{2}+\rho\sigma\eta )^2}{2\eta^2(1+\varepsilon)}\,\ud u\right].
\end{align}
Next, we characterize the value function by means of Feynman-Kac formula for Markov-modulated diffusion processes, see \cite{Baran2013} and \cite{escobar2015portfolio}. To this, for every $i\in{1, \dots, K}$ we define functions $u(\cdot,\cdot, i):[0,T]\times \R \to \R_+$ by
\begin{align}
u(t,s,i)= \mathbb{E}^{t,s,i}\left[ \int_t^T \frac{(\kappa (\theta(Y_u)-S_u)-\frac{\eta^2}{2}+\rho\sigma\eta )^2}{2\eta^2(1+\varepsilon)}\,\ud u\right].
\end{align}
Then for every $i\in{1, \dots, K}$, functions $u(\cdot,\cdot,i)$, satisfy
\begin{align}
&\nonumber u_t(t,s,i)+\kappa(\theta_i-s)u_s(t,s,i)+\frac{\eta^2}{2}u_{ss}(t,s,i)\\
\label{refpde}&\quad+\sum_{j=1}^Ku(t,s,j) q^{ij}+ \frac{(\kappa(\theta_i-s) +\frac{\eta^2}{2}+\rho\sigma\eta)^2}{2\eta^2(1+\varepsilon)}=0,
\end{align}
with the terminal condition $u(T,s,i)=0$.
Suppose that function $u(t,s,i)$ is of the form $u(t,s,i)=d(t)s^2+c(t,i)s+f(t,i)$.
By using this ansatz, we get the following equation
\begin{align}
\nonumber 0=&c_t(t,i)s+d_t(t)s^2+f_t(t,i)+ \eta^2d(t)  + \frac{(\kappa(\theta_i-s)-\frac{\eta^2}{2}+\rho\sigma\eta)^2}{2\eta^2 (1+\varepsilon)} \\
& +\kappa(\theta_i-s)(c(t,i)+2d(t)s) +\sum_{j=1}^K ( c(t,j)s+f(t,j)) q^{ij}.
\end{align}
Collecting together the terms with $s^2$, $s$ and the remaining ones we get that the function $d(t)$ solves
\begin{align}
&d_t(t)-2\kappa d(t)+ \frac{\kappa^2}{2\eta^2(1+\varepsilon)}=0,\quad d(T)=0,
\end{align}
and for every $i\in \{1,\dots,K \}$, $c(t,i)$  and $f(t,i)$ solve the system of ODEs in \eqref{eq:system1} and \eqref{eq:system}, respectively, see e.g., Theorem 3.9 in \citet{teschl2012ordinary}.
\end{proof}

\begin{remark}
\begin{itemize}
\item[i)]Note that the optimal value is always positive provided that $z>1$, and the expectation in \eqref{stochrepvaluefull} can also be evaluated by computing the first and second moments of the Markov-modulated Ornstein--Uhlenbeck process. This can be achieved, for example as given in \citet{huang2016markov}, by solving a non-homogeneous linear system of differential equations.
\item[ii)] In the current setting the market is in general incomplete implying that, for instance, we can not rely on the martingale approach, see, for example, \citet{bjork2010optimal}.
\end{itemize}
\end{remark}


The optimal portfolio strategy $h^{\ast}$ has three components. The component related to dollar-neutrality is given by $\frac{1}{2(1+\varepsilon)}$. This is intuitively clear considering ``non-pairs'' in the sense that there is no correlation ($\rho=0$) and no-cointegration ($\kappa=0$). The other two components are arising from the dependence structure between two stocks. Namely, the first component $\frac{\kappa\left(\theta_i-s\right)}{(1+\varepsilon)\eta^2}$ is related to the co-integration between two stocks, whereas the second component $\frac{\rho \sigma}{\eta({1+\varepsilon})}$ is related to the correlation structure. To wit, suppose now that the current spread is equal to the long-term mean of the current regime, that is $(\theta_i-s)=0$ or $\kappa=0$, then the optimal strategy for a given $\varepsilon>0$ is determined by only the correlation $\rho$ between first stock and spread scaled by the ratio of volatilities of both. One can interpret this case as the dollar-neutral investment strategy in assets with correlated returns. On the other hand, if $\rho$ is zero, the optimal strategy is only determined by the spread dynamics.

\begin{remark}
Suppose that, instead of a \textit{dollar-neutral} strategy, the trader wants to use a \textit{beta-neutral} strategy, that is a strategy of the form $\beta_1h^{(1)}+\beta_2h^{(2)}=0$, where $\beta_1$ and $\beta_2$ denote CAPM betas of $S^{(1)}$ and $S^{(2)}$, respectively. Then the optimal strategy is given by
\begin{align}h^{\ast}(t,s,i)=\frac{1}{1+\varepsilon}\left(\frac{\mu_i\beta_2(\beta_2-\beta_1)+\beta_1\beta_2\kappa\left(\theta_i-s\right)-\beta_1 \beta_2\frac{\eta^2}{2}+\beta_1\beta_2\rho \sigma\eta}{(\sigma(\beta_2-\beta_1)-\beta_1\eta)^2}\right),\end{align}
and the value function has the similar structure as in the dollar-neutral case given above.
\end{remark}

\section{Optimization Problem under Partial Information }\label{sec:partial_info}
\noindent We assume now that the state process $Y$ is not directly observable by the trader. Instead, she observes the price processes $S^{(1)}$  and $S^{(2)}$  and she knows the model parameters. Hence, information available to the trader is carried by the natural filtration of $S^{(1)}$ and $S^{(2)}$. This is equivalent to the set of information carried by $S^{(1)}$ and the spread $S$, that is,
\begin{align}\mathbb{F}=(\mathcal{F}_t)_{t\geq 0}, \ \mathcal{F}_t=\sigma\{ S_u, S^{(1)}_u, 0 \le u \le t\}, \ \mathcal{F}_t\subset \mathcal{G}_t.\end{align}
In the sequel we assume that filtration $\bF$ satisfies the usual hypotheses.

\paragraph{\bf Admissible Investment Strategies.} Decisions of the trader should depend only on the information available to her at time $t$. That is, we consider self-financing investment strategies such that $h$ is $\bF$-progressive. Then we have the following definition of admissible strategies under partial information.
\begin{dfn}
An $\bF$-progressive self-financing investment strategy $h$ that satisfies \eqref{eq:A1} and \eqref{eq:A2} is an $\bF$-\emph{admissible investment strategy}. We denote the set of $\bF$-admissible strategies by $\mathcal{A}^\bF$.
\end{dfn}

The partially informed trader aims to maximize the expected utility $\mathbb{E}[\log Z_T],$ over the class $\mathcal A^\bF$. In this case, we naturally end up with an optimal control problem under partial information. In the next part, to solve such a problem we will derive an equivalent control
problem under full information via the so-called reduction approach, see, e.g., \citet{fleming1982optimal}. This requires the derivation of the filtering equation for the
unobservable state variable. After reduction, the corresponding control problem can be interpreted as one with smooth transitions governed by the  dynamics of filtered probabilities. We discuss this aspect in Section \ref{sec:2stateMC} for the case of a two-state Markov chain.

\subsection{The filtering equation}
In this section we address the problem of characterizing the conditional distribution of the unobservable Markov chain $Y$, given the observation. In our setting, the observations process is given by the pair
\begin{align}\left(\ud R_t,\ud S_t\right)^{\top}=A(t,Y_t,S_t )\ud t+\Sigma \ud B_t,\end{align}
where process $R$ is the log-return of $S^{(1)}$, i.e., $\ud R_t=\frac{\ud S^{(1)}_t}{S^{(1)}_t}$ with $R_0=0$,  $B=(W^{(1)}, W^{(2)})^\top$ is a $2$-dimensional $\mathbb{G}$-Brownian motion independent of $Y$ and
\begin{gather}
A(t,Y_t, S_t)=  \left( \begin{array}{c}
\mu(Y_t)  \\
\kappa(\theta(Y_t)-S_t)   \end{array} \right) ,
\quad  \Sigma= \left( \begin{array}{cc}
    \sigma & 0  \\
   \rho\eta & \sqrt{1-\rho^2}\eta  \end{array} \right), \quad t \in [0,T].
\end{gather}
Note that process $R$ and $S^{(1)}$ generate the same information.

For any function $f$, we denote by $\widehat{f(Y)}$ the optional projection with respect to filtration $\bF$, that is $\widehat{f(Y_t)}=\esp{f(Y_t)|\F_t}$, a.s., for every $t \in [0,T]$. Process $\widehat{f(Y)}$, for every function $f$, provides the filter. By the finite state property of the Markov chain we get that
\begin{align}
\widehat{f(Y_t)}=\sum_{j=1}^K f(e_j)p^j_t, \quad t \in [0,T],
\end{align}
where $p^j_t=\P(Y_t=e_j|\F_t)$, $t \in [0,T]$. Then, in order to characterize the conditional distribution of $Y$, it is sufficient to derive the dynamics of  the processes $p^j$, $j\in\{1,\dots,K\}$.
To this,  we will use the so-called \emph{innovations approach}. This method is based on finding a suitable $\bF$-progressive process that drives the dynamics of the filter, see e.g., \citet{wonham1964} and \citet{elliott1994hidden} for more details.
We define the 2-dimensional process $I=(I^{(1)}, I^{(2)})^\top$ by
\begin{align}\label{What}
I_t=B_t+\int_0^t\Sigma^{-1}(A(u, Y_u, S_t)-\widehat{A(u, Y_u, S_t)})\,\ud u, \quad t \in [0,T].
\end{align}
Explicitly we have
\begin{align}
I^{(1)}_t&=W^{(1)}_t+\int_0^t\frac{\mu(Y_u)-\widehat{\mu(Y_u)}}{\sigma}\,\ud u,\\
I^{(2)}_t&=W^{(2)}_t+\int_0^t\frac{\sigma \kappa (\theta(Y_u)-\widehat{\theta(Y_t)})-\rho\eta(\mu(Y_u)-\widehat{\mu(Y_u)})}{\sigma \eta \sqrt{1-\rho^2}}\,\ud u,
\end{align}
for every $t \in [0,T]$.
\begin{remark}\label{inn} The process $I$ is called innovation process and it is well known that $I$ is an  $(\bF, \P)$-Brownian motion, see Proposition 2.30 in \citet{bain2009fundamentals}.
\end{remark}

Note that, since the signal $Y$ and the Brownian motion $B$ driving the observation process are assumed to be independent, the filtration $\bF$ coincides with the natural filtration of the innovation process, see Theorem 1 in \citet{allinger1981new}. Then, by Theorem III.4.34-(a) in \citet{jacod2013limit} every $(\P, \bF)$-local martingale $M$ admits the following representation:
\begin{align}\label{eq:mg_representation}
M_t=M_0+\int_0^tH_u\, \ud I_u,\quad t\in[0,T],
\end{align}
for some $\bF$-predictable 2-dimensional process $H$ such that
\begin{equation}
\int_0^T\|H_u\|^2\,\ud u<\infty \quad \P-\text{a.s.}
\end{equation}

We recall the notation $\bmu= (\mu_1,\dots,\mu_K)^{\top}$, where $\mu_i=\mu(e_i)\in \mathbb{R}$, and $\btheta= (\theta_1,\dots,\theta_K)^{\top}$, where $\theta_i=\theta(e_i)\in \mathbb{R}$. Also introduce $\mathbf{f}=(f_1,\dots,f_K)^{\top}$, where $f_i=f(e_i)\in \mathbb{R}$. The next theorem provides the filter dynamics.
\begin{theorem}\label{thm:filter}
For every $i\in\{1,\dots, K\}$, the filter process $p^i$ satisfies
\begin{align}
\nonumber p^i_t&=p^i_0+\int_0^t\sum_{j=1}^K q^{ji}p^j_u\, \ud u+ \frac{1}{\sigma}\int_0^tp_u^i(\mu^i-\bmu^{\top}p_u)\,\ud I^{(1)}_u \\
&\quad{}+ \frac{1}{\sigma \eta \sqrt{1-\rho^2}}\int_0^t p^i_u\left( \sigma \kappa  \left(\theta_i-\btheta^{\top}p_u \right)-\eta\rho\left(\mu_i-\bmu^{\top}p_u \right)\right)\, \ud I^{(2)}_u,\quad{}p^i_0=\Pi^i,\label{eq:filtering-1}
\end{align}
for every $t \in [0,T].$
\end{theorem}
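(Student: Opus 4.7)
The plan is to apply the innovations approach component-wise. Since $Y_t$ takes values in the canonical basis of $\R^K$, one has $p^i_t = \esp{\langle Y_t, e_i\rangle \mid \F_t}$, so computing the filter amounts to projecting each component $Y^i$ onto $\bF$. From the semimartingale decomposition $Y_t = Y_0 + \int_0^t Q^\top Y_s\,\ud s + M_t$, the $i$-th component reads $Y^i_t = Y^i_0 + \int_0^t \sum_{j=1}^K q^{ji} Y^j_s\,\ud s + M^i_t$, and projecting the absolutely continuous part onto $\bF$ (via Fubini) yields the drift $\int_0^t \sum_j q^{ji} p^j_s\,\ud s$ that appears in \eqref{eq:filtering-1}.

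Next, $p^i_t - \int_0^t \sum_j q^{ji} p^j_s\,\ud s$ is an $(\bF,\P)$-martingale. Since by Remark \ref{inn} and the independence of $Y$ and $B$ the filtration $\bF$ coincides with the one generated by the innovation process $I$, the representation result \eqref{eq:mg_representation} produces $\bF$-predictable processes $H^{i,(1)}, H^{i,(2)}$ such that this martingale part equals $\int_0^\cdot H^{i,(1)}_u\,\ud I^{(1)}_u + \int_0^\cdot H^{i,(2)}_u\,\ud I^{(2)}_u$; the core of the proof is to identify these gains.

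To do this I would carry out the standard Kushner--Stratonovich derivation: apply It\^o's formula to the product $Y^i_t \, O^{(k)}_t$ with $O=(R,S)^\top$ the observation process, take the $\F_t$-optional projection, and match cross-variations with $I^{(k)}$. The output is the compact formula
\begin{align}
(H^{i,(1)}_t,H^{i,(2)}_t) = \bigl(\widehat{Y^i_t A(t,Y_t,S_t)^\top} - p^i_t\,\widehat{A(t,Y_t,S_t)}^\top\bigr)\,\Sigma^{-\top}.
\end{align}
The key simplification is that $Y^i_t\, A(t,Y_t,S_t) = \I_{\{Y_t = e_i\}} A(t,e_i,S_t)$, so the conditional expectation factorizes as $\widehat{Y^i A^\top} = p^i_t\, A(t,e_i,S_t)^\top$. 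Subtracting $p^i_t \widehat{A}^\top$ then produces the vector $p^i_t(\mu_i - \bmu^\top p_t,\ \kappa(\theta_i - \btheta^\top p_t))$, with the $S_t$ terms cancelling automatically.

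Finally, a direct computation gives
\begin{align}
\Sigma^{-\top} = \begin{pmatrix} 1/\sigma & -\rho/(\sigma\sqrt{1-\rho^2}) \\ 0 & 1/(\eta\sqrt{1-\rho^2}) \end{pmatrix},
\end{align}
and a one-line multiplication recovers precisely the coefficients in \eqref{eq:filtering-1}. The main obstacle is the rigorous justification of the gain identification: one must verify the required cross-variation identities $\ud\langle p^i, I^{(k)}\rangle_t = H^{i,(k)}_t\,\ud t$, which hinges on $\langle M^i, B\rangle \equiv 0$ (the Markov-chain martingale $M^i$ is purely discontinuous and independent of the continuous noise $B$), so that all covariation contributions come through the drift-projection term. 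Everything else is bookkeeping and linear algebra.
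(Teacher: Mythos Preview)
Your proposal is correct and follows essentially the same innovations approach as the paper: project the semimartingale decomposition of the indicator $\I_{\{Y=e_i\}}$ onto $\bF$, represent the martingale part against $I$, then identify the gains by computing the product with the observation process and comparing finite-variation terms. The only cosmetic difference is that the paper multiplies by $\Sigma^{-1}$ up front (working with the auxiliary process $m_t=I_t+\int_0^t\Sigma^{-1}\widehat{A}\,\ud u$) whereas you keep the raw observation $O=(R,S)^\top$ and apply $\Sigma^{-\top}$ at the end; both routes are equivalent bookkeeping and lead to the same gain formula.
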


\begin{proof}
Consider the semimartingale decomposition of $f(Y)$ given by
\begin{align}
f(Y_t)= f(Y_0)+\int_0^t \langle Q\mathbf f, Y_{u^-}\rangle\, \ud u + M^{(1)}_t,\quad t \in [0,T],
\end{align}
where $M^{(1)}$ is a $(\mathbb G, \P)$-martingale. Now, projecting over $\bF$ leads to
\begin{align}
\widehat{f(Y_t)}- \widehat{f(Y_0)}-\int_0^t \langle Q\mathbf f, \widehat{Y}_{u^-}\rangle\, \ud u = M^{(2)}_t,\quad t \in [0,T],
\end{align}
where $M^{(2)}$ is an $(\bF, \P)$-martingale. Using the martingale representation in \eqref{eq:mg_representation} we get
\begin{align}
\widehat{f(Y_t)}- \widehat{f(Y_0)}-\int_0^t \langle Q\mathbf f, \widehat{Y}_{u^-}\rangle\, \ud u = \int_0^tH_u\, \ud I_u,\quad t \in [0,T].
\end{align}
Let $m_t=I_t+\int_0^t\Sigma^{-1} \widehat{A(u, Y_u)}\,\ud u$, for every $t \in [0,T]$. Computing the product $f(Y)\cdot m$ and projecting on $\bF$, we obtain
\begin{align}\label{eq:filt1}
\widehat{f(Y_t)\cdot m_t}= \int_0^t\!\!\! m_u \langle Q\mathbf f, \widehat{Y}_u\rangle \, \ud u + \int_0^t\!\! \Sigma^{-1}\widehat{f(Y_u) A(u, Y_u)}\,\ud u + M^{(3)}_t,\,\, t \in [0,T],
\quad{}\end{align}
for some $(\bF, \P)$-martingale $M^{(3)}$.
We now compute the product  $\widehat{f(Y)}\cdot m$ as
\begin{align}\label{eq:filt2}
\widehat{f(Y_t)} \cdot m_t= \int_0^t \!\!m_u \langle Q\mathbf f, \widehat{Y}_u\rangle\,\ud u + \int_0^t\!\! \Sigma^{-1}\widehat{f(Y_u)} \widehat{A(u, Y_u)}\, \ud u + \int_0^t\!\! H_u \,\ud u + M^{(4)}_t.\qquad{}
\end{align}
for every $t \in [0,T]$, where $M^{(4)}$ is an $(\bF, \P)$-martingale.
Comparing the finite variation terms in \eqref{eq:filt1} and \eqref{eq:filt2}, we get
\begin{align}
H^{(1)}_t&=\frac{\widehat{f(Y_t) \mu(Y_t)}-\widehat{f(Y_t)}\widehat{\mu(Y_t)}}{\sigma},\\
H^{(2)}_t&=\frac{\sigma \kappa (\widehat{f(Y_t)\theta(Y_t)}-\widehat{f(Y_t)}\widehat{\theta(Y_t)})-\eta\rho(\widehat{f(Y_t)\mu(Y_t)}-\widehat{f(Y_t)}\widehat{\mu(Y_t)})}{\sigma\eta\sqrt{1-\rho^2}},
\end{align}
for every $t \in [0,T]$. Finally choosing $f(Y_t)=\I_{\{Y_t=e_i\}}$, we obtain the result.
\end{proof}


\begin{remark} Here notice that the drift and diffusion coefficients in \eqref{eq:filtering-1} are continuous, bounded and locally Lipschitz. This implies that $p$ is the unique strong solution of the filtering equation \eqref{eq:filtering-1}.
\end{remark}

\subsection{Reduction of the Optimal Control Problem}
The semimartingale decomposition  of $Z$ and $S$ with respect to the observation filtration are given by
\begin{align}
\nonumber Z_t=&Z_0+\int_0^tZ_u\left(h_u\left(\kappa(\btheta^{\top}p_u-S_u)-\frac{\eta^2}{2}+\rho\sigma\eta \right)+r-\frac{\varepsilon \eta^2h_u^2}{2}  \right)\ud u\\
&+\eta \int_0^t h_uZ_u\left( \rho\, \ud I^{(1)}_u+\sqrt{1-\rho^2}\, \ud I^{(2)}_u\right) ,\quad t \in [0,T],\label{eq:semimg_Z}
\end{align}
and
\begin{align}\label{shat}
S_t=S_0+\int_0^t\kappa(\btheta^{\top}p_u-S_u)\,\ud u+\eta\int_0^t\left( \rho\,\ud I^{(1)}_u+\sqrt{1-\rho^2}\,\ud I^{(2)}_u\right), \quad t \in [0,T] .
\end{align}

Thanks to uniqueness of the solution of the filtering equation we can consider the $(K+2)$-dimensional process $(Z,S,p)$ as the state process and introduce the equivalent optimal control problem under full information, see, e.g., \citet{fleming1982optimal}. We have
\begin{align}\label{opt2}
\max\,\,\mathbb{E}^{t,z,s, \p}[\log{Z_T}],\end{align}
where $\mathbb{E}^{t,z,s,\p}$ denotes the conditional expectation given $Z_t=z$, $S_t=s$ and $p_t=\p$, where $(z,s,\p)\in\mathbb R_+\times \mathbb R \times \Delta_K$, with $\Delta_K$ denoting the $(K-1)$-dimensional simplex.
We define the value function of the trader by
\begin{align}\label{vf}V(t,z,s,\p):=\underset{h\in\mathcal{A}^\bF}\sup\,\,\mathbb{E}^{t,z,s,\p}\left[\log{Z_T}\right].\end{align}
To obtain the optimal strategy it is possible to apply pointwise maximization, which also leads to an explicit characterization for the value function. This is given in the next theorem.

\begin{theorem}\label{optimalpartial} Consider a trader with a logarithmic utility function with risk penalization parameter $\varepsilon\ge0$. Then the optimal portfolio strategy $h^* \in \mathcal{A}^{\bF}$ under partial information is
\begin{align}h^{\ast}(t,s,\p)=\frac{1}{1+\varepsilon}\left(\frac{\kappa\left(\btheta^{\top}\p-s\right)}{\eta^2}+\frac{\rho \sigma}{\eta}-\frac{1}{2}\right).
\end{align}
The value function is of the form \begin{align}V(t, z, s, \p )=\log(z)+r(T-t)+ d(t) s^2+ c(t,\p)s+f(t,\p),\end{align}
where the function $d(t)$ is given by
\begin{align}
d(t)=\frac{\kappa}{4\eta^2(1+\varepsilon)}\left(1-e^{-2\kappa(T-t)}\right),
\end{align}
and the functions $c(t,\p)$ and $f(t,\p)$ solve the following system of partial differential equations:
\begin{align}
&c_t(t,\p)\!+\!\frac{1}{2}\!\!\sum_{i,j=1}^K \!\!\widetilde{\alpha}^{ij}(\p)c_{p^ip^j}(t,\p)\!+\!\!\!\sum_{i,j=1}^K \!\!c_{p^i}(t,\p)q^{ji}p^j \! +\!\kappa\big(2d(t)\btheta^{\top}\p-c(t,\p)\big)\!-\!\gamma(\p)\!=\!0,\ \quad{} \label{eq:pde-1}\\
\nonumber&f_t(t,\p)+\frac{1}{2} \sum_{i,j=1}^K \widetilde{\alpha}^{ij}(\p)f_{p^ip^j}(t,\p)+\sum_{i,j=1}^K f_{p^i}(t,\p)q^{ji}p^j+\eta\sum_{i=1}^K c_{p^i}(t,\p)\widetilde{\beta}^i(\p)\\
&\qquad+c(t,\p)\kappa\btheta^{\top}\p+\eta^2d(t)+\frac{\left(\kappa\btheta^{\top}\p-\frac{1}{2}\eta^2+\rho\sigma\eta\right)^2 }{2\eta^2(1+\varepsilon)}=0,\qquad\label{eq:pde-2}
\end{align}
with terminal conditions $c(T,\p)=0$ and $f(T,\p)=0$ for every $\p\in\Delta_K$, and where
\begin{gather}
\widetilde{ \alpha}^{i,j}(\p)= H^{(i,1)}(\p)H^{(j,1)}(\p)+H^{(i,2)}(\p)H^{(j,2)}(\p),\quad i,j \in \{1,\dots, K\},\\
\widetilde{ \beta}^i(\p)=\rho H^{(i,1)}(\p)+\sqrt{1-\rho^2}H^{(i,2)}(\p),  \quad i \in \{1,\dots, K\},   \\
H^{(i,1)}(\p)=p^{i}\frac{(\mu_i-\bmu^{\top}\p)}{\sigma},\qquad H^{(i,2)}(\p)=p^i\frac{ \sigma \kappa  \left(\theta_i-\btheta^{\top}\p \right)-\eta\rho\left(\mu_i-\bmu^{\top}\p \right)}{\sigma\eta\sqrt{1-\rho^2}},\\
\gamma(\p)=\frac{\kappa}{\eta^2(1+\varepsilon)}\left(\btheta^{\top}\p-\frac{1}{2}\eta^2+\rho\sigma \eta \right).
\end{gather}
\end{theorem}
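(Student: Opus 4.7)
The approach parallels the proof of Theorem \ref{optimalfull}, but now working with the reduced full-information state process $(Z, S, \p)$ driven by the innovation Brownian motion $I$. First, I would apply It\^o's formula to $\log Z$ using the semimartingale decomposition \eqref{eq:semimg_Z}; taking conditional expectation and exploiting \eqref{eq:A2} to kill the stochastic integrals yields, for every $h\in \mathcal{A}^{\bF}$,
\begin{align*}
\mathbb{E}^{t,z,s,\p}[\log Z_T] = \log z + r(T-t) + \mathbb{E}^{t,s,\p}\!\left[\int_t^T \left(h_u A_u - \tfrac{(1+\varepsilon)\eta^2}{2} h_u^2\right) du \right],
\end{align*}
where $A_u := \kappa(\btheta^{\top} p_u - S_u) - \tfrac{\eta^2}{2} + \rho\sigma\eta$. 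Pointwise maximization in $h_u$ of the concave quadratic integrand immediately produces the candidate $h^{\ast}(t,s,\p) = A/[(1+\varepsilon)\eta^2]$, which rewrites as the announced formula; the second order condition $-(1+\varepsilon)\eta^2 < 0$ and the $\bF$-progressiveness of $(S,\p)$ ensure $h^{\ast} \in \mathcal{A}^{\bF}$.

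Substituting $h^{\ast}$ back into the expectation gives the stochastic representation
\begin{align*}
V(t,z,s,\p) = \log z + r(T-t) + u(t,s,\p), \qquad u(t,s,\p) := \mathbb{E}^{t,s,\p}\!\left[\int_t^T \frac{A_u^2}{2\eta^2(1+\varepsilon)}\, du \right].
\end{align*}
Under $\bF$, the pair $(S,\p)$ is a $(K{+}1)$-dimensional diffusion whose generator can be read off from \eqref{shat} and \eqref{eq:filtering-1}: the drifts are $\kappa(\btheta^{\top}\p - s)$ and $\sum_j q^{ji}p^j$, while direct computation of the quadratic (co-)variations of the driving noise $I$ yields $\ud\langle S\rangle_t = \eta^2\, \ud t$, $\ud\langle p^i, p^j\rangle_t = \widetilde\alpha^{ij}(\p_t)\, \ud t$ and $\ud\langle S, p^i\rangle_t = \eta\, \widetilde\beta^i(\p_t)\, \ud t$. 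The Feynman--Kac formula for this diffusion then identifies $u$ as the unique classical solution of a linear parabolic PDE on $[0,T]\times \R \times \Delta_K$ with zero terminal data and source term $A(s,\p)^2/[2\eta^2(1+\varepsilon)]$.

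Finally, the quadratic-in-$s$ structure of the source motivates the ansatz $u(t,s,\p) = d(t) s^2 + c(t,\p) s + f(t,\p)$. Substituting and collecting the coefficients of $s^2$, $s^1$ and $s^0$ decouples the problem into three pieces: the $s^2$ coefficient gives the scalar ODE $d_t - 2\kappa d + \kappa^2/[2\eta^2(1+\varepsilon)] = 0$ with $d(T) = 0$, whose closed-form solution is the one stated and coincides with the full-information case; the $s^1$ and $s^0$ coefficients produce the PDEs \eqref{eq:pde-1} and \eqref{eq:pde-2} with zero terminal data. The main technical obstacle is the careful bookkeeping of the cross co-variation $\eta\, \widetilde\beta^i$ between $S$ and the filter components $p^i$, which is what generates the mixed term $\eta \sum_i c_{p^i}(t,\p) \widetilde\beta^i(\p)$ in \eqref{eq:pde-2}; once this identification is made, matching coefficients reduces to a mechanical calculation. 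A closing verification step, using the boundedness of $\btheta^{\top} \p$ on the simplex $\Delta_K$ to justify the use of It\^o's formula under $\mathbb{E}^{t,z,s,\p}$ and a standard martingale argument, confirms that the candidate attains the supremum over $\mathcal{A}^{\bF}$.
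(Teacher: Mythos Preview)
Your proposal is correct and follows essentially the same route as the paper's proof: pointwise maximization of the concave integrand to obtain $h^{\ast}$, substitution to get the stochastic representation $u(t,s,\p)$, application of the Feynman--Kac formula to the diffusion $(S,\p)$, and then the quadratic-in-$s$ ansatz to decouple into the ODE for $d$ and the PDEs \eqref{eq:pde-1}--\eqref{eq:pde-2}. Your write-up is in fact a bit more explicit than the paper's (you spell out the covariation bookkeeping that produces $\widetilde\alpha^{ij}$ and $\widetilde\beta^i$, and you flag the verification step), but the underlying argument is the same.
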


\begin{proof}
The proof of Theorem \ref{optimalpartial} follows the same lines of that of Theorem \ref{optimalfull} and it is provided, for completeness, in Appendix.
\end{proof}

\paragraph{\textbf{Comments and discussion.}}
By Theorem~\ref{optimalfull} and Theorem~\ref{optimalpartial}, optimal strategies depend on both the correlation between two assets and the mean-reverting spread. Moreover, they do not depend on the risk-free rate $r$ because a priori we restrict ourselves to the dollar-neutral pairs trading strategies. Comparing optimal strategies under full and partial information, we can say that the so-called \emph{certainty equivalence principle} holds, i.e., the optimal portfolio strategy in the latter case can be obtained by replacing the unobservable state variable with its filtered estimate.\footnote{This unorthodox definition of certainty equivalence principle is due to \citet{kuwana1995certainty} and used in literature related to partial information models, see e.g., \citet{bauerle2004portfolio}.}

 The effect of risk-penalization on optimal strategies is to increase the risk-aversion uniformly in a constant proportion that is not dependent on time. It effectively decreases the proportion of wealth invested in pairs and increases the proportion of wealth invested in the risk-free asset. Considering the optimal value functions, in both cases, they are quadratic functions of the current value of the spread. However, in both cases, coefficients (factor loadings) on the quadratic term, $s^2$, depend only on time. This result is worth to mention since it means that the trader does not really consider the effect of the partial information on the quadratic level of the current spread. Finally, note that similar results hold true for beta-neutral strategies.

\section{Toy Example: Two-State Markov Chain }\label{sec:2stateMC}
\noindent In this section, we give a toy example of our proposed model, where the unobservable Markov chain has only two states. Here our main aim is to demonstrate certain qualitative features of the model that are difficult to verify analytically. During our analysis, we set $z=1$, $\theta_1=0.1$, $\theta_2=0.6$, $\mu_1=0.2$ and $\mu_2=1$. In the first step, we consider the full information case, where the trader knows the state of the Markov chain. Then, we investigate the case with the partial information.

\subsection{The full information case}
In this part, we employ Theorem~\ref{optimalfull} where we solve the corresponding system of ODEs numerically. In the following, since we set $z=1$ we suppress the dependence of the value function on $z$ and write $V(t,s,i)$ for $i\in\{1,2\}$.

In Figure~\ref{fig:crossing}, we  illustrate optimal values with respect to time to maturity for a given initial state and for different values of initial spread ($s=0.1$, $s=0.3$ and $s=0.7$).  It suggests that for all initial states and for all values of the initial spread, the optimal value increases in time to maturity since trading possibilities increase as there would be more time to trade. Moreover, as it is expected from a pairs trading strategy, the wider the gap between the initial spread and the long-run mean of the initial state's spread, the higher the optimal value provided that there is enough time to have the spread close with high probability.
For example, in Figure~\ref{fig:crossing} (left panel), we observe $V(t,0.1,2)>V(t,0.1,1)$ for all $t$ . This corresponds to the case where the trader could exploit the wide enough gap between initial spread, $s=0.1$, and the long-run mean of the second state, $\theta_2=0.6$. A similar behaviour is observed in Figure~\ref{fig:crossing} (right panel), where in this case the gap between the initial spread, $s=0.7$ and the long-run mean of the first state, $\theta_1=0.1$, is large enough for $V(t,0.7,1)>V(t,0.7,2)$ for all $t$.

However in Figure~\ref{fig:crossing} (middle panel), there is no clear dominance between optimal values corresponding to different initial states. This can be explained by the following observation. The initial spread, which is 0.3, is approximately at the same distance to both states' long-run means hence the intersection point of the two functions $V(\cdot,0.3,1)$ and $V(\cdot,0.3,2)$ depends more on the transition intensities of the Markov chain $q^{12}$ and $q^{21}$. In particular, for this example, fixing all other parameters, the intersection point moves to the right as $q^{12}$ gets larger. Overall we can conclude that the main determinants of the observed dominance are the gap between initial spread and the long-run mean of states, transition probabilities as well as remaining time to maturity.
\begin{figure}[htbp]
\centering
\includegraphics[width=4.35cm,height=6cm]{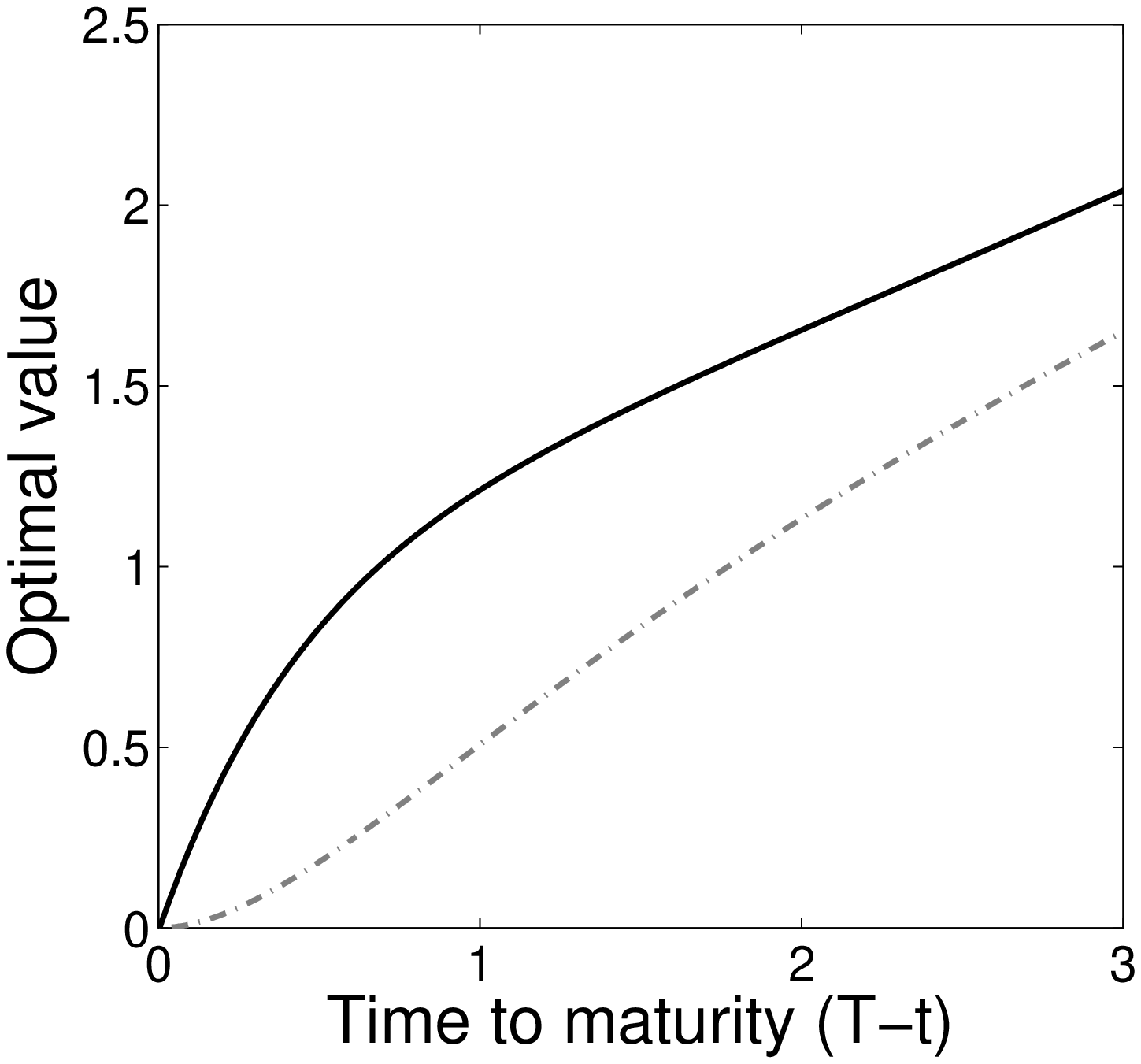}
\hspace{-.6cm}
\includegraphics[width=4.35cm,height=6cm]{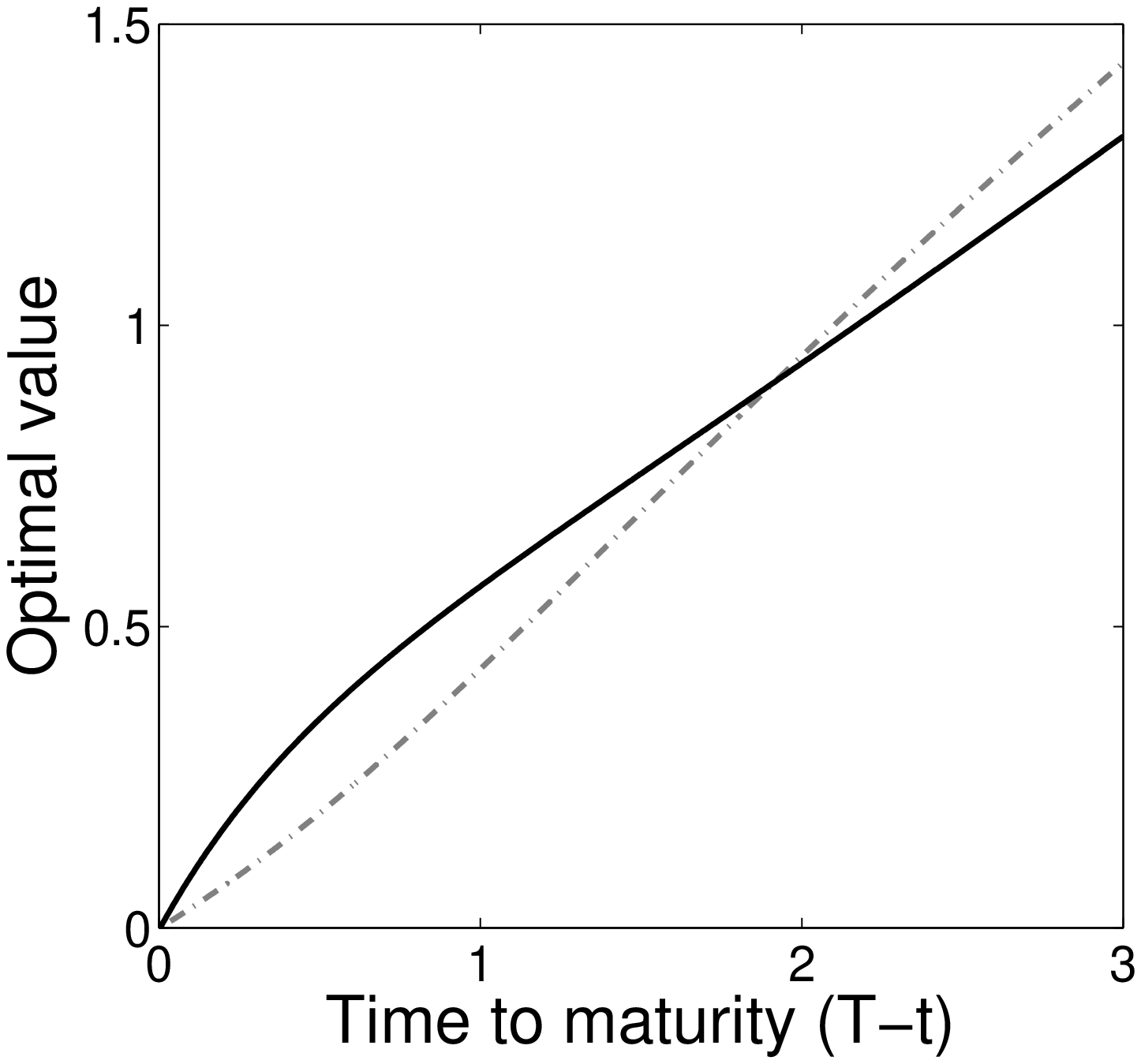}
\hspace{-.6cm}
\includegraphics[width=4.35cm,height=6cm]{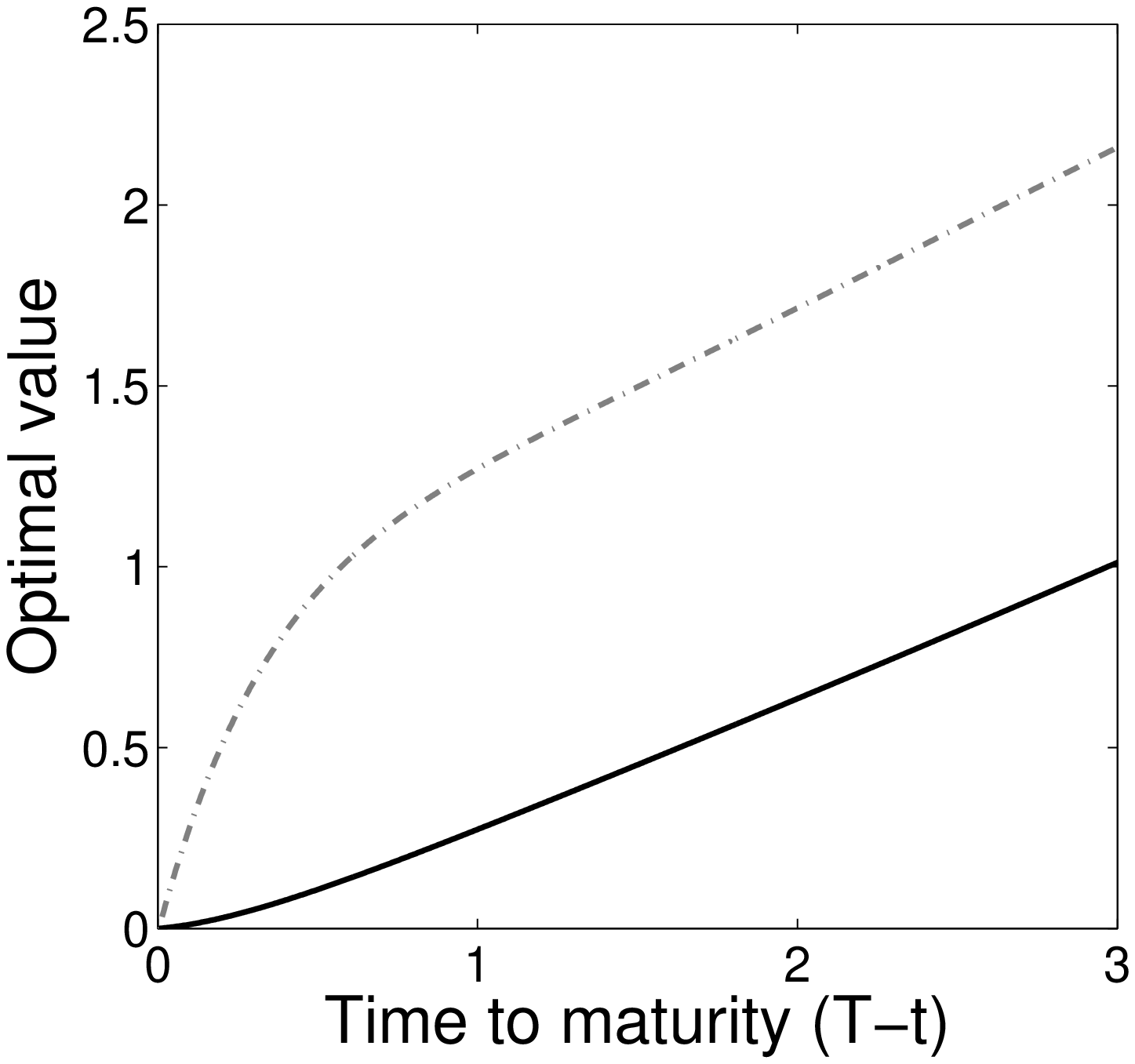}
\caption{Optimal value as a function of time to maturity for different values of initial spread, $s$, when the initial state is $e_1$ (dashed line) or $e_2$ (solid line). Left panel $s=0.1$. Middle panel $s=0.3$. Right panel $s=0.7$. Other parameters:  $z=1$, $r=0.01$, $\theta_1=0.1$, $\theta_2=0.6$, $\kappa=1$, $\rho=0.9$, $\sigma=0.2$, $\eta=0.2$, $\varepsilon=0.3$, $q^{12}=0.7$ and $q^{21}=0.2$.}
\label{fig:crossing}
\end{figure}

Next in Figure~\ref{fig:mmod_avg}, we compare the value of the current optimal portfolio problem with the optimal value computed using the {\em averaged data}.
Let $(\pi,1-\pi)$ denotes the stationary distribution of the Markov chain $Y$. Suppose we have two traders, one of which ignores the Markov modulated nature of the underlying spread and considers the {\em averaged data} $\overline{\theta}=\pi\theta_1+(1-\pi)\theta_2$ as the long-run mean spread. On the other hand, the second trader assumes our proposed Markov modulated model, that is, she acts in line with what Theorem~\ref{optimalfull} suggests.  We want to compare the value function $\overline{V}(t,s)$ obtained in the model  assuming averaged data with the value function in the Markov-modulated case. In this way, we intend to see whether the knowledge of averaged data is sufficient to obtain the optimal value for the current pairs trading problem.
To this, we set $q^{12}=1$ and $q^{21}=2$, and compute $\pi=\frac{q^{21}}{q^{12}+q^{21}}=0.67$. Then, we get $\overline{\theta}=0.27$. In Figure~\ref{fig:mmod_avg} we plot $\overline{V}(t,s)$ versus $\mathbb{E}^{\pi}[V(t,s,Y_t)]=\pi V(t,s,1)+(1-\pi)V(t,s,2).$
We observe that $\mathbb{E}^{\pi}[V(t,s,Y_t)]>\overline{V}(t,s)$. This implies that the averaged data does not contain sufficient information to obtain the optimal value for the pairs trading problem and hence on the average, the second trader performs better than the first one. This result is in contrast with the one for the classical portfolio optimization problem with Markov modulation in the case of logarithmic utility preferences. See Section B of \citet{bauerle2004portfolio}.  We attribute this to the mean-reverting nature of the underlying state variable.

\begin{figure}[htbp]
\centering
\includegraphics[width=6.25cm,height=6cm]{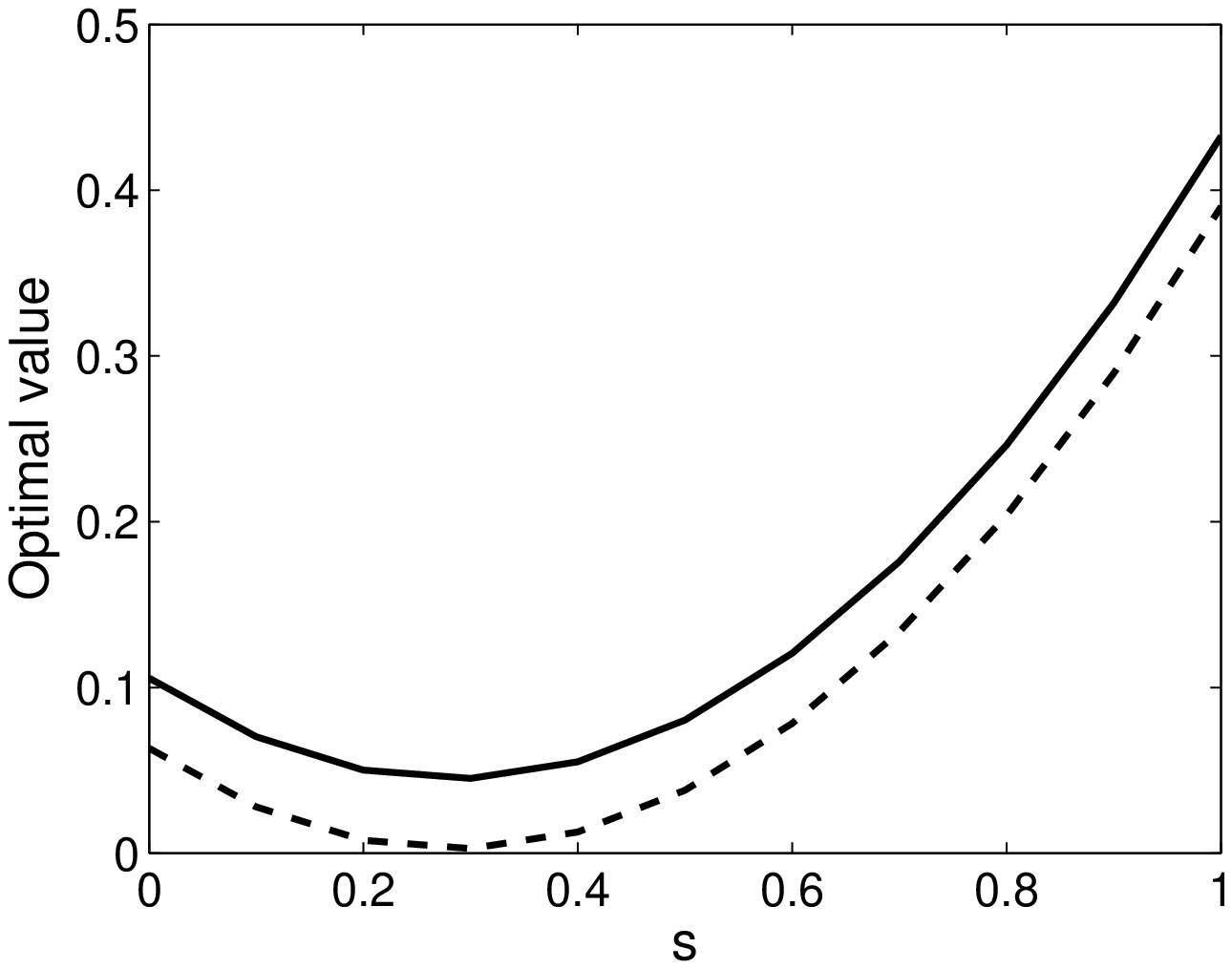}
\includegraphics[width=6.25cm,height=6cm]{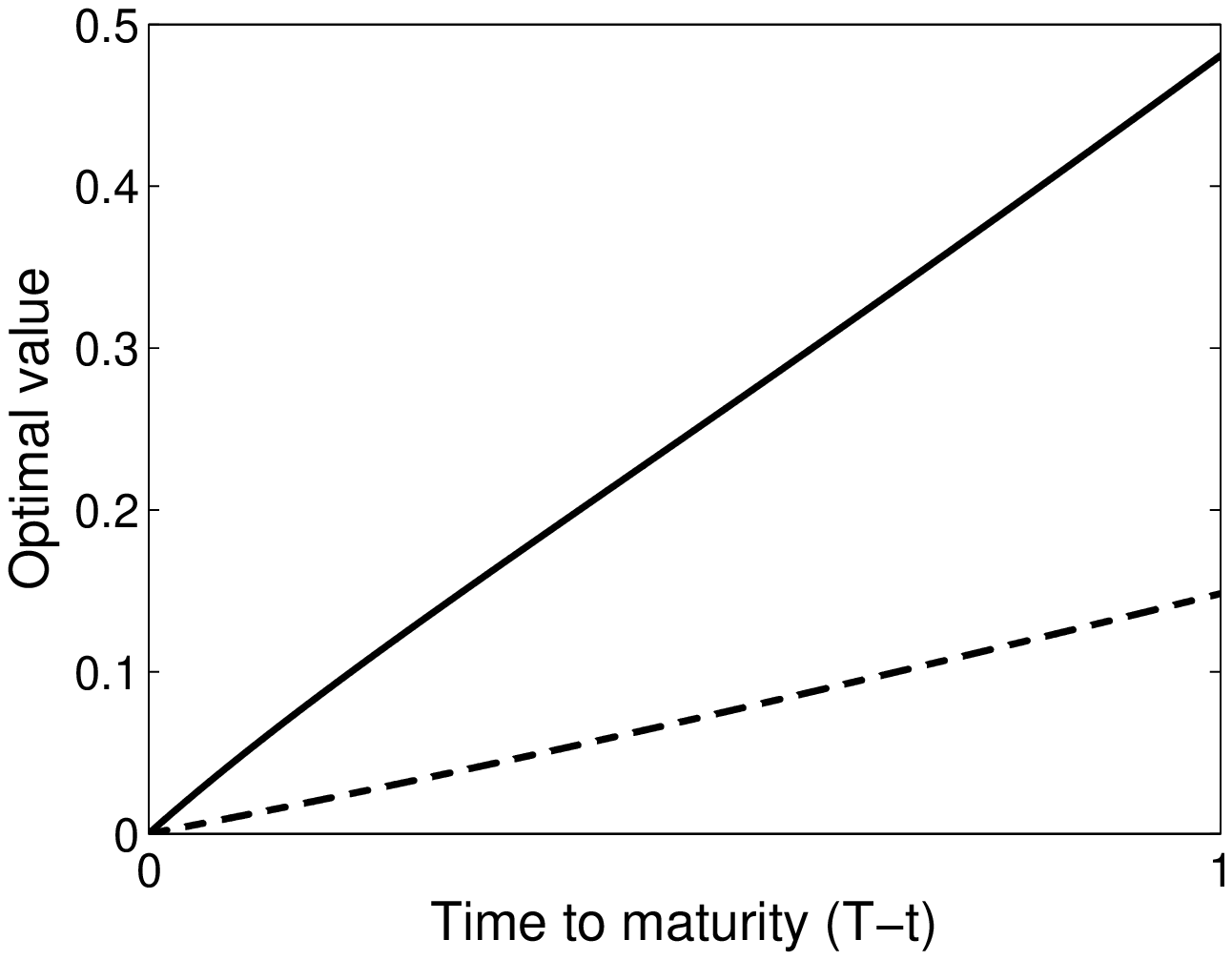}
\caption{Left panel: Optimal value as a function of initial spread for time to maturity $T-t=0.1$ years. Right panel: Optimal value as a function of time to maturity for initial spread $s=0.3$. The solid line (resp. dashed line) indicates the optimal value corresponding to Markov switching case (resp. averaged data case ). Other parameters: $z=1$, $r=0.01$, $\theta_1=0.1$, $\theta_2=0.6$, $\kappa=1$, $\rho=0.9$, $\sigma=0.2$, $\eta=0.2$, $\varepsilon=0.5$, $q^{12}=1$ and $q^{21}=2$.}
\label{fig:mmod_avg}
\end{figure}

Figure~\ref{fig:rho_kappa} depicts the behavior of the value function with respect to the mean-reversion speed $\kappa$, for correlation values $\rho=0.1$ and $\rho=0.9$. In the case without Markov switching one would expect higher values of $\kappa$ to yield higher optimal values since that would imply more visits to the long-run mean generating profit opportunities from pairs trading more frequently. Here, we observe that higher values of $\kappa$ not necessarily lead to larger portfolio values since there is the risk of a regime switch which would result in a sudden change in the long-run mean value.
\begin{figure}[htbp]
\centering
\includegraphics[width=8.25cm,height=6cm]{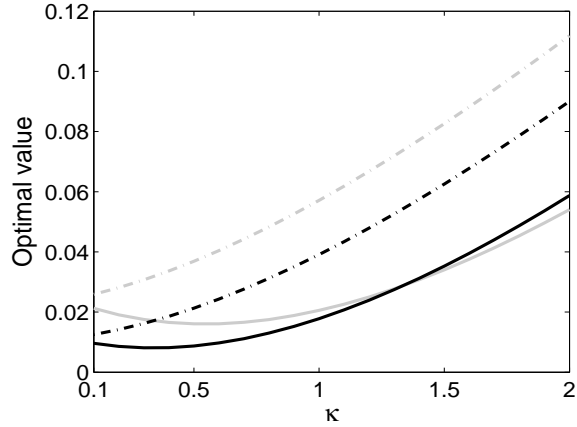}
\caption{Impact of mean reversion speed $\kappa$ on optimal value.  Dashed line (resp. solid line) corresponds to the optimal value when the initial state is $e_1$ (resp. $e_2$). Grey line: $\rho=0.1$, black line: $\rho=0.9$. Other parameters: $T-t=3$ years, $z=1$, $r=0.01$, $\theta_1=0.1$, $\theta_2=0.6$, $s=0.3$, $\eta=0.9$, $\sigma=0.2$, $\varepsilon=0.3$,  $q^{12}=0.7$ and $q^{21}=0.2$.  }
\label{fig:rho_kappa}
\end{figure}

\subsection{The partial information case}
In the partially observable setting, having only two states enables us to reduce the number of state variables for our filtered control problem since $p:=p^{1}=1-p^{2}$. Then we only need the dynamics of $p$, given, after arrangement, by
\begin{align}\label{eq:filtertwostate}
\ud p_t=(q^{12}+q^{21})\left(\frac{q^{21}}{q^{12}+q^{21}}-p_t\right)\,\ud t+\sqrt{\nu^2_1+\nu_2^2}p_t(1-p_t)\,\ud I_t^{(3)},
\end{align}
\noindent
where $\nu_1=\frac{(\mu_1-\mu_2)}{\sigma}$ and $\nu_2=\frac{\sigma\kappa(\theta_1-\theta_2)-\eta\rho(\mu_1-\mu_2)}{\sigma\eta\sqrt{1-\rho^2}}$, and $I^{(3)}=\frac{\nu_1}{\sqrt{\nu_1^2+\nu_2^2}}I^{(1)}+\frac{\nu_2}{\sqrt{\nu_1^2+\nu_2^2}}I^{(2)}$ is an $\mathbb{F}$-Brownian motion. We can write the semimartingale decomposition of wealth and spread processes with respect to filtration $\bF$ as
\begin{align}\label{eq:wealthtwostate}
\nonumber \ud Z_t=&Z_t\left(h_t\left(\kappa(\theta_2+(\theta_1-\theta_2)p_t-S_t)-\frac{\eta^2}{2}+\rho\sigma\eta \right)+r-\frac{\varepsilon \eta^2h_t^2}{2}  \right)\ud t\\
&+\eta h_tZ_t\,\ud \tilde{I}_t,
\end{align}
and
\begin{align}\label{eq:shat}
\ud S_t=\kappa\left(\theta_2+(\theta_1-\theta_2)p_t-S_t\right)\,\ud t+\eta\,\ud \tilde{I}_t,
\end{align}
where $\tilde{I}$ is a $\mathbb{F}$-Brownian motion with $\langle \tilde{I},I^{(3)}\rangle_t=\frac{\nu_1\rho+\nu_2\sqrt{1-\rho^2}}{\sqrt{\nu^2_1+\nu_1^2}}t$.

Note that one can interpret the reduced control problem with state variables $(Z,S,p)$ given by \eqref{eq:wealthtwostate}, \eqref{eq:shat} and \eqref{eq:filtertwostate} as a pairs trading model with smooth transitions. More precisely,one can see $p$ as a state variable process governing smooth transitions between two regimes with different long-term means for the spread, that is, $\theta_1$ and $\theta_2$. The dynamics of $p$ is also very similar to a mean-reverting Jacobi-type (or Wright--Fisher) diffusion used in population genetics to model allele frequencies\footnote{For the Jacobi or Wright--Fisher diffusion, the diffusion coefficient is given by $\sqrt{p(1-p)}.$}, see e.g.,~\citet{ethier1976class}, \citet{sato1976diffusion} or \citet{gourieroux2006multivariate}.

In this case the value function can be written as $V(t,z,s,p^1,p^2)=\widetilde{V}(t,z,s,p)$, and, as in Theorem \ref{optimalpartial}, the
optimal value is given by $\widetilde{V}(t,z,s,p)=\log(z)+r(T-t)+d(t) s^2+\widetilde{c}(t,p)s+\widetilde{f}(t,p)$,
where the function $d(t)$ is given by
\begin{align}
d(t)=\frac{\kappa}{4\eta^2(1+\varepsilon)}\left(1-e^{-2\kappa(T-t)}\right),
\end{align}
and the functions $\widetilde{c}(t,p)$ and $\widetilde{f}(t,p)$ solve the following system of partial differential equations:

\begin{align}
\nonumber &\widetilde{c}_t(t,p)-\frac{\kappa^2(\theta_2+(\theta_1-\theta_2)p)-\kappa(\frac{-\eta^2}{2}+\rho\sigma\eta)}{\eta^2(1+\varepsilon)}-\kappa\widetilde{c}(t,p)+2\kappa(\theta_2+(\theta_1-\theta_2)p)d(t)\\
&+(q^{12}+q^{21})\left(\frac{q^{21}}{q^{12}+q^{21}}-p\right)\widetilde{c}_p(t,p)+\frac{1}{2}(\nu^2_1+\nu_2^2)p^2(1-p)^2\widetilde{c}_{pp}(t,p)=0,\end{align}
\begin{align}
\nonumber &\widetilde{f}_t(t,p)+\frac{\kappa^2(\theta_2+(\theta_1-\theta_2)p)^2+(\rho\sigma\eta-\frac{\eta^2}{2})^2+2\kappa(\theta_2+(\theta_1-\theta_2)p)(\rho\sigma\eta-\frac{\eta^2}{2})}{2\eta^2(1+\varepsilon)}\\
\nonumber &+\eta^2d(t)+\kappa(\theta_2+(\theta_1-\theta_2)p)\widetilde{c}(t,p)+(q^{12}+q^{21})\left(\frac{q^{21}}{q^{12}+q^{21}}-p\right)\widetilde{f}_p(t,p)\\
&+\frac{1}{2}(\nu^2_1+\nu_2^2)p^2(1-p)^2\widetilde{f}_{pp}(t,p)+\kappa(\theta_1-\theta_2)p(1-p)\widetilde{c}_p(t,p)=0,\label{eq:pdeexamp}
\end{align}
with terminal conditions $\widetilde{c}(T,p)=0$ and $\widetilde{f}(T,p)=0$ for every $p\in[0,1],$

We use an explicit finite-difference method to solve the system of PDEs given in \eqref{eq:pdeexamp} numerically. In order to guarantee the positivity of the scheme we use forward-backward approximation for the first order derivatives. The value function in the partial information case has a similar behavior with respect to the parameters as the one in the full information case. However, we stress that in the partial information setting, also the drift parameters $\mu_1$ and $\mu_2$ play a role. In particular, the relative values of $\mu_1$, $\mu_2$ and the noise parameters $\sigma$ and $\eta$ control for the precision of the filtered probability estimates.

In Figure~\ref{fig:gains_filtering} we illustrate that the trader benefits from using filtered estimates instead of average data. As it can be seen clearly, gains from filtering increase in time to maturity. On the other hand, gains get smaller as $p$ moves towards $\frac{1}{2}$, which represent the most uncertain situation.

\begin{figure}[htbp]
\centering
\includegraphics[width=10cm,height=8cm]{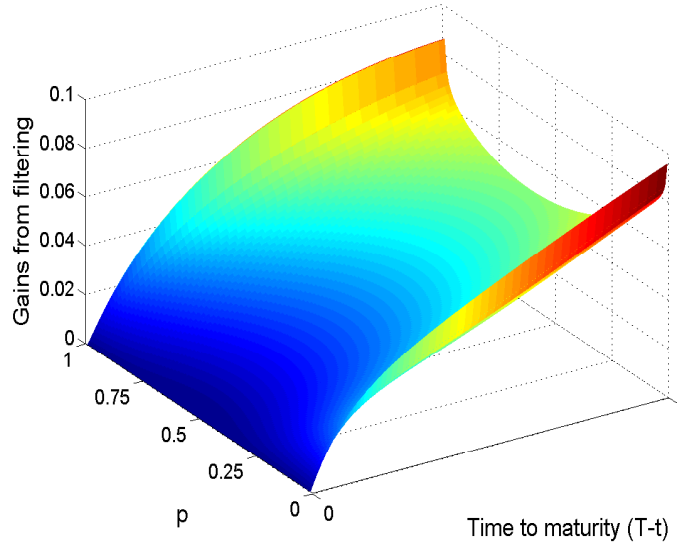}
\caption{Gains from filtering as a function of $p$ and time to maturity. Other parameters: $z=1$, $r=0.01$, $\theta_1=0.1$, $\theta_2=0.6$, $\mu_1=0.2$, $\mu_2=1$, $\kappa=1$, $\rho=0.9$, $\sigma=0.2$, $\eta=0.2$, $\varepsilon=0.5$, $s=0.3$, $q^{12}=1$ and $q^{21}=2$.}
\label{fig:gains_filtering}
\end{figure}
We can summarize the findings of this section as follows:
\begin{itemize} 
\item[(a)] the wider the gap between the initial spread and the long-run
mean of the initial state's spread, the higher the optimal value provided that there is
enough time to have the spread close with high probability, 
\item[(b)] the average data does not contain sufficient information to obtain the optimal value for the current pairs trading problem,
\item[ (c)] higher values of the mean reversion speed $\kappa$ does not necessarily imply higher optimal values,
\item[ (d)] in the partial information setting, there is a gain from filtering due to the convexity originating from using filtered probabilities.
\end{itemize}
\section{Conclusion}\label{sec:discussion}
In this paper, we have considered the pairs trading for a trader with logarithmic utility preferences and risk penalized terminal wealth. By penalizing the terminal wealth with the realized volatility of the portfolio, we could capture the intertemporal risk factor more easily with just one parameter, that is $\varepsilon$. We have assumed that the mean-reversion level of the spread is Markov switching and studied the utility maximization problem under full and partial information, corresponding to the cases where the trader may or may not observe the state of the Markov chain directly.

In the full information setting, we have computed the optimal strategy and characterized the value function up to the unique solution to a system of ODEs via the Feynman--Kac formula. In the partial information case we have first derived the filter dynamics and then studied the corresponding optimization problem, where the unobservable state of the Markov chain is replaced by its filtered estimate. We have addressed the problem by pointwise maximization, and we have represented the value function in terms of the solution of a system of PDEs.

In the last part of the paper, we have presented a numerical example in which the Markov chain has two possible states. In the full information setting, we have studied the behavior of the value function with respect to several parameters. An interesting result has been that the value of the optimal portfolio is always strictly larger than the value function computed when the Markov modulated mean-reversion level of the spread is replaced by the average mean-reversion level (with respect to the stationary distribution of the Markov chain). Hence we have concluded that the knowledge of the average data is not sufficient to obtain the optimal portfolio value.

In the partial information case, we have observed that the trader always benefits from using filtered estimates for the state of the Markov chain instead of the average data. Gains are larger in less uncertain situations. That happens when the conditional probability of being in one of the states is close to zero or one. Correspondingly, when conditional probabilities are close to $\frac{1}{2}$, gains are smaller.
\appendix

\section{Proof of Theorem \ref{optimalpartial} }\label{sec:appendix}

\begin{proof}
Here, as in the case of full information we maximize pointwisely. We first write
\begin{align}\nonumber\mathbb{E}^{t,z,s,\p}[\log Z_T]=&\log(z)+r(T-t)-\mathbb{E}^{t,s,\p}\left[ \int_t^T \frac{h_u^2\eta^2(1+\varepsilon)}{2}\,\ud u\right]     \\
\label{logustint2}&+\mathbb{E}^{t,s,\p}\left[ \int_t^T h_u\left(\kappa (\btheta^\top p_u-S_u)-\frac{\eta^2}{2}+\rho\sigma\eta \right)\ud u\right],\end{align}
where, $\mathbb{E}^{t,s,\p}$ denotes the conditional expectation given $S_t=s$ and $p_t=\p$.
The first and second order conditions imply that the optimal strategy is given by
\begin{equation}
h^{\ast}(t,s,\p)=\frac{1}{1+\varepsilon}\left(\frac{\kappa\left(\btheta^{\top}\p-s\right)}{\eta^2}+\frac{\rho \sigma}{\eta}-\frac{1}{2}\right).
\end{equation}
This leads to the following stochastic representation for the optimal value,
\begin{align}\label{stochrepvalue}
\log(z)+r(T-t)+\mathbb{E}^{t,s,\p}\left[ \int_t^T \frac{(\kappa (\btheta^{\top}p_u-S_u)-\frac{\eta^2}{2}+\rho\sigma\eta )^2}{2\eta^2(1+\varepsilon)}\,\ud u\right].
\end{align}
We define the function $u:[0,T]\times \R\times \Delta_K \to \R_+$ by
\begin{align}
u(t,s,\p)= \mathbb{E}^{t,s,\p}\left[ \int_t^T \frac{(\kappa (\btheta^{\top}p_u-S_u)-\frac{\eta^2}{2}+\rho\sigma\eta )^2}{2\eta^2(1+\varepsilon)}\,\ud u\right].
\end{align}
By applying the Feynman--Kac formula and plugging the ansatz $u(t,s,\p)=d(t)s^2+c(t,\p)s+f(t,\p)$ in the resulting equation leads to the system of linear partial differential equations in  \eqref{eq:pde-1}-\eqref{eq:pde-2} and the following linear ordinary differential equation
\begin{align}\label{eq:D}
d_t(t)-2\kappa d(t)+ \frac{\kappa^2}{2\eta^2(1+\varepsilon)}=0, \quad d(T)=0.
\end{align}
Note that the system \eqref{eq:pde-1} and \eqref{eq:pde-2} admits a unique solution, see Chapter 9 of \citet{friedman2008partial}.
\smallskip

\end{proof}
\section*{Acknowledgments}
S\"{u}han Altay gratefully acknowledges financial support from the Austrian Science Fund (FWF) under grant P25216. The work on this paper was completed while Zehra Eksi was visiting the Department of Economics, University of Perugia as a part of the ACRI Young Investigator Training Program (YITP). The support of the Association of Italian Banking Foundations and Savings Banks (ACRI) is greatly acknowledged.

%
%

\end{document}